\definecolor{Violet}{rgb}{0.2,0.2,0.6}
\newcommand{\hwrp}{\mathtt{>}\!\!\mathtt{>}\!\mathtt{=}}
\newcommand{\type}[1]{\mathbf{#1}}
\newcommand{\unit}{()}
\newcommand{\dom}{\mathtt{dom}}
\newcommand{\seq}[1]{\overrightarrow{#1}}
\newenvironment{defn}{\begin{tabbing}
  \hspace{1.5em} \= \hspace{.10\linewidth - 1.5em} \= \hspace{1.5em} \= \kill
  }{
  \end{tabbing}}
\newcommand{\entry}[2]{\>$#1$\>\>#2}
\newcommand{\clause}[2]{$#1$\>\>#2}
\newcommand{\category}[2]{\clause{#1::=}{#2}}
\begin{document}

\title{Event Synchronization by Lightweight Message Passing}

\author{Avik Chaudhuri} 
\institute{Computer Science Department\\ University of California, Santa Cruz \\
\email{avik@cs.ucsc.edu}
}

\maketitle

\begin{abstract}
Concurrent ML's events and event combinators facilitate modular concurrent programming with first-class synchronization abstractions. A standard implementation of these abstractions relies on fairly complex manipulations of first-class continuations in the underlying language. In this paper, we present a lightweight implementation of these abstractions  in Concurrent Haskell, a language that already provides  first-order message passing. At the heart of our implementation is a new distributed synchronization protocol. In contrast with several previous translations of event abstractions in concurrent languages, we remain faithful to the standard semantics for events and event combinators; for example, we retain the symmetry of $\mathtt{choose}$ for expressing selective communication. 
\end{abstract}




\section{First-class synchronization abstractions}\label{intro}
In his doctoral thesis \cite{reppyThesis}, Reppy invents the concept of first-class synchrony to facilitate modular concurrent programming in ML. He argues: 
\begin{quote}
{\small Unfortunately there is a fundamental conflict between the desire for abstraction and the need for selective communication [in concurrent programs]. \ldots To resolve the conflict \ldots requires introducing a new abstraction mechanism that preserves the synchronous nature of the abstraction.}   \end{quote}
Thus, Reppy introduces a new type constructor $\type{event}$ to type synchronous operations in much the same way as $\rightarrow$ (``arrow") types functional values. 
\begin{quote}
{\small This allows us to represent synchronous operations as first-class values, instead of merely as functions \ldots [and design] a collection of combinators for defining
new $\type{event}$ values [from primitive ones]. \ldots Selective communication is expressed as a choice among $\type{event}$ values, which means
that user-defined abstractions can be used in a selective communication without breaking the abstraction.}
\end{quote}
Reppy implements events in an extension of ML, called Concurrent ML (CML)~\cite{reppyBook}, and provides a formal semantics for synchronization of events \cite{reppyThesis}. While the implementation itself is fairly complex, it allows programmers to write sophisticated communication and synchronization protocols as first-class abstractions in the resulting language. Next, we provide a brief introduction to CML's events and event combinators. The interested reader can find a more detailed account of first-class synchrony and its significance as a programming paradigm for concurrency in \cite{reppyBook}. 

In particular, note that $\type{channel}$ and $\type{event}$ are polymorphic type constructors in CML, as follows:
\begin{itemize}
\item The type $\type{channel}~\tau$ is given to channels that carry values of type $\tau$. 
\item The type $\type{event}~\tau$ is given to events that return values of type $\tau$ on synchronization. 
\end{itemize}
The combinators $\mathtt{receive}$ and $\mathtt{transmit}$ can build primitive events for synchronous communication. 
\vspace{-4mm}
\begin{eqnarray*}
\mathtt{receive} & : & \type{channel}~\tau \rightarrow \type{event}~\tau  \\
\mathtt{transmit} & : & \type{channel}~\tau \rightarrow \tau \rightarrow \type{event}~\type{\unit} 
\end{eqnarray*}
\begin{itemize}
\item $\mathtt{receive}~c$ returns an event that, on synchronization, accepts a message $M$ on channel $c$ and returns $M$. Such an event must synchronize with $\mathtt{transmit}\:c\:M$.
\item $\mathtt{transmit}\:c\:M$ returns an event that, on synchronization, sends the message $M$ on channel $c$ and returns $\unit$ (``unit"). Such an event must synchronize with  $\mathtt{receive}\:c$.
\end{itemize}
The combinator $\mathtt{choose}$ can non-deterministically select an event from a list of events, so that the selected event can be synchronized. In particular, $\mathtt{choose}$ can express any selective communication. 

The combinator $\mathtt{wrapabort}$ can specify an action that is spawned if an event is not selected by a $\mathtt{choose}$. \vspace{-1mm}
\begin{eqnarray*}
\mathtt{choose} & : & \type{[}\type{event}~\tau\type{]} \rightarrow \type{event}~\tau \\
\mathtt{wrapabort} & : & (\type{\unit } \rightarrow \type{\unit }) \rightarrow \type{event}~\tau \rightarrow \type{event}~\tau 
\end{eqnarray*}
\begin{itemize}
\item $\mathtt{choose}\:V$ returns an event that, on synchronization, synchronizes one of the events in list $V$ and ``aborts" the other events. 
\item $\mathtt{wrapabort}\:f\:v$ returns an event that, on synchronization, synchronizes the event $v$, and on abortion, spawns a thread that runs the code $f\:\unit$. Here, if $v$ itself is of the form $\mathtt{choose}\:V$ and one of the events in $V$ is selected, then $v$ is considered selected, so $f$ is \emph{not} spawned. 
\end{itemize}
The combinators $\mathtt{guard}$ and $\mathtt{wrap}$ can specify actions that are run before and after synchronization, respectively. \vspace{-2mm}
\begin{eqnarray*}
\mathtt{guard} & : & (\type{\unit } \rightarrow \type{event}~\tau) \rightarrow \type{event}~\tau \\
\mathtt{wrap} & : & \type{event}~\tau \rightarrow (\tau \rightarrow \tau') \rightarrow \type{event}~\tau' 
\end{eqnarray*}
\begin{itemize}
\item $\mathtt{guard}\:f$ returns an event that, on synchronization, synchronizes the event returned by the code $f\:\unit$. Here, $f\:\unit$ is run every time a thread tries to synchronize $\mathtt{guard}\:f$.
\item $\mathtt{wrap}\:v\:f$ returns an event that, on synchronization, synchronizes the event $v$ and applies function $f$ to the result. 
\end{itemize}
Finally, the function $\mathtt{sync}$ can synchronize an event and return the result.  \vspace{-1mm}
$$\mathtt{sync} ~~:~~  \type{event}~\tau \rightarrow \tau  \vspace{-1mm}$$
Note that by construction, an event can synchronize at exactly one ``commit point", where a message is either sent or accepted on a channel. This commit point may be selected among several other, potential commit points. Some code may be run before synchronization, as specified by $\mathtt{guard}$ functions throughout the event. Some more code may be run after synchronization, as specified by $\mathtt{wrap}$ functions that surround the commit point, and by $\mathtt{wrapabort}$ functions that do not surround the commit point. 

Reppy's implementation of synchronization involves fairly complex manipulations of first-class continuations in phases 
\cite{reppyBook}. 
Even the channel communication functions
\begin{eqnarray*}
\mathtt{accept} & : & \type{channel}~\tau \rightarrow \tau \\
\mathtt{send} & : & \type{channel}~\tau \rightarrow \tau \rightarrow \type{\unit}
\end{eqnarray*}
are derived by synchronization on the respective base events.
\begin{eqnarray*}
\mathtt{accept}\:c & = & \mathtt{sync}\:(\mathtt{receive}\:c) \\
\mathtt{send}\:c\:M & = & \mathtt{sync}\:(\mathtt{transmit}\:c\:M)
\end{eqnarray*}
In contrast, in this paper we show how to implement first-class event synchronization in a language that already provides first-order synchronous communication. 
Our implementation relies on a new distributed synchronization protocol, which  
we formalize as an abstract state machine and prove correct (Section~\ref{sec:protocol}). We concretely implement this protocol by message passing in Concurrent Haskell \cite{conchaskell}, a language that is quite close to the pi calculus \cite{milner89calculus}.
Building on this implementation, we present an encoding of CML events and event combinators in Concurrent Haskell (Sections \ref{compile} and \ref{sec:wrapabort}). 

We are certainly not the first to encode CML-style concurrency primitives in another language. However, a lightweight implementation of first-class event synchronization by message passing, in the exact sense of Reppy~\cite{reppyThesis}, has not appeared before. We defer a more detailed discussion on related work to Section~\ref{sec:relwork}.


Before we present our protocol, we introduce its main ideas through the following (fictional) narrative, which describes an analogous protocol for arranging marriages. 
%
\begin{quote}
In the Land of Fr\={\i}g, there are many young inhabitants who are eager to get married. At the same time, the rivalry among siblings is so fierce that if a young man or woman gets married, his or her siblings commit suicide. There are many priests, who serve as matchmakers. Eager young inhabitants flock to these matchmakers to meet prospective partners of the opposite sex. When two such partners meet, they reserve the priest who matches them, and then inform their parents. Meanwhile, the priest stops matching other couples. 

\hspace{5.5mm}Parents select the first child to inform them about meeting a partner, and send their approval to the concerned priest. For all other children who are too late, on the other hand, they send back their refusal. If a priest receives approval from both parties, he confirms the marriage date to both sides; following this information, the couple weds. If one party refuses and the other approves, the priest alerts the approving side that the impending marriage must be canceled; following this information, the young members of that family begin searching for partners once again. The priest now resumes matching other couples.
\end{quote}
Obviously, we would like to have progress in the Land of Fr\={\i}g---\emph{weddings should be possible as long as there remain eligible couples}. Can we prove this property?
Yes. We first prove the following lemma: \emph{if there remain two inhabitants of opposite sex who are looking for partners, a priest can eventually match them}. 
Indeed, pick any priest. Either that priest is free, or two partners who have already met have reserved that priest. In the latter case, both partners inform their parents; both parents send their decisions to the priest; therefore, the priest is eventually free.

Now, suppose that there remain two inhabitants $A$ and $B$ of opposite sex who are looking for partners. Then (by the lemma above) a priest eventually matches them. Next, $A$ and $B$ inform their parents. Now, if both $A$ and $B$ are the first among their siblings to inform their parents, they eventually get married and we are done. On the other hand, suppose that one of $A$'s siblings informs $A$'s parent first. (Clearly, that sibling could not have been married when $A$ was looking for partners, since otherwise, $A$ would have been dead, not looking for partners.) Now, either that sibling gets married, and we are done; or, that sibling does not get married, and $A$ tries again. Similarly, either one of $B$'s siblings gets married and we are done, or $B$ tries again. Now, if $A$ and $B$ both try again, they can be the first among their siblings to inform their parents; so, they eventually get married, and we are done.


\section{A distributed event synchronization protocol}\label{sec:protocol}
We now present a distributed protocol for synchronizing events, that is based on the protocol for arranging marriages in the Land of Fr\={\i}g. 
Specifically, we interpret
\begin{itemize}
\item young inhabitants as potential commit points, or simply, \emph{points};
\item priests as channels;
\item parents as synchronization sites, or simply, \emph{synchronizers}.
\end{itemize}
In other words, points, channels, and synchronizers are principals in our protocol. 
A point is a site of pending input or output on a channel---every $\mathtt{receive}$ or $\mathtt{transmit}$ event contains a point. Every application of $\mathtt{sync}$ contains a synchronizer. 

We focus on synchronization of events that are built with the combinators $\mathtt{receive}$, $\mathtt{transmit}$, and $\mathtt{choose}$. The other combinators do not fundamentally affect the protocol; we consider them only in the concrete implementation in Section \ref{compile}. 
\subsubsection{A source language}
For brevity, we simplify the syntax of the source language.
\begin{itemize}
\item Actions $\alpha,\beta$ are of the form $c$ or $\overline c$ (input or output on channel $c$).
\item Programs are of the form $S_1~|~\dots~|~S_n$ (parallel composition of $S_1,\dots,S_n$), where each $S_i$ is either an action $\alpha$, or a synchronization of a choice of actions $\mathsf{select}(\seq{\alpha})$.
\end{itemize}
Further, we consider only the following local reduction rule, which models selective communication: \vspace{-2mm}
$$\infer
	{c \in \seq \alpha \\ \overline c \in \seq{\beta}}
	{\mathsf{select}(\seq \alpha)~|~\mathsf{select}(\seq{\beta}) \rightarrow c~|~\overline c}
$$
besides the usual structural rules for parallel composition. In particular, we ignore reduction of actions at this level of abstraction.

\subsubsection{A distributed abstract state machine for synchronization}
We formalize our protocol as a distributed abstract state machine that implements the above semantics of selective communication. 
Let $\sigma$ range over states of the machine. These states may be of the form $\sigma~|~\sigma'$ (parallel composition), $(\nu p)~\sigma$ (name restriction), or $\varsigma$ (sub-state of some principal in the protocol). Sub-states may be of the following forms.

\parbox{14cm}{
\hspace{-5mm}
\parbox{7cm}{
\begin{defn}
\category{\varsigma_p}{sub-states of points}\\
\entry{p \mapsto \alpha}{unmatched}\\
\entry{\mathsf{Candidate}_p}{matched}\\
\entry{\alpha}{married}\\
\category{\varsigma_c}{sub-states of channels}\\
\entry{\odot_c}{free}\\
\entry{\mathsf{Match}_c(p,q)}{busy}
\end{defn}
}
\quad
\parbox{5cm}{
\begin{defn}
\category{\varsigma_s}{sub-states of synchronizers}\\
\entry{\Box_s}{open}\\
\entry{\boxtimes_s}{closed}\\
\entry{\mathsf{Select}_s(p)}{approved}\\
\entry{\mathsf{Reject}(p)}{refused}\\
\entry{\mathsf{Done}_s(p)}{confirmed}\\
\entry{\mathsf{Retry}_s}{canceled}
\end{defn}
}
}
%
Here $p$, $c$, and $s$ range over points, channels, and synchronizers. A synchronizer is a partial function from points to actions; we represent this function as a parallel composition of bindings of the form $p \mapsto \alpha$. Further, we require that each point belongs to exactly one synchronizer, that is, for any $s$ and $s'$, $s \neq s' ~\Rightarrow~ \dom(s) \cap \dom(s') = \varnothing$. The semantics of the machine is described by the following local transition rules, plus the usual structural rules for parallel composition and name restriction (\emph{cf.} the pi calculus \cite{milner89calculus}, for example). In Section \ref{compile}, these rules are implemented by message passing between appropriate processes run at points, channels, and synchronizers.
$$\infer
	{}
	{p \mapsto c~|~q \mapsto \overline c~|~\odot_c \rightarrow \mathsf{Candidate}_p~|~\mathsf{Candidate}_{q}~|~\mathsf{Match}_c(p,q)}
\qquad ({\rm I})	
$$
$$\infer
	{p \in \dom(s)}
	{\mathsf{Candidate}_p~|~\Box_s \rightarrow \boxtimes_s~|~\mathsf{Select}_s(p)}
~ ({\rm II.i}) \quad
\infer
	{p \in \dom(s)}
	{\mathsf{Candidate}_p~|~\boxtimes_s \rightarrow \boxtimes_s~|~\mathsf{Reject}(p)}
~ ({\rm II.ii})	
$$
$$\infer
	{}
	{\mathsf{Select}_s(p)~|~\mathsf{Select}_{s'}(q)~|~\mathsf{Match}_c(p,q) \rightarrow \mathsf{Done}_s(p)~|~\mathsf{Done}_{s'}(q)~|~\odot_c}
\qquad ({\rm III.i})	
$$
$$\infer
	{}
	{\mathsf{Select}_s(p)~|~\mathsf{Reject}(q)~|~\mathsf{Match}_c(p,q) \rightarrow \mathsf{Retry}_s~|~\odot_c}
\qquad ({\rm III.ii})	
$$
$$\infer
	{}
	{\mathsf{Reject}(p)~|~\mathsf{Select}_{s}(q)~|~\mathsf{Match}_c(p,q) \rightarrow \mathsf{Retry}_s~|~\odot_c}
\qquad ({\rm III.iii})	
$$
$$\infer
	{}
	{\mathsf{Reject}(p)~|~\mathsf{Reject}(q)~|~\mathsf{Match}_c(p,q) \rightarrow \odot_c}
\qquad ({\rm III.iv})	
$$
$$\infer
	{s(p) = \alpha}
	{\mathsf{Done}_s(p) \rightarrow \alpha}
\quad ({\rm IV.i}) \quad~~
\infer
	{\dom(s) = \seq p}
	{\mathsf{Retry}_s \rightarrow (\nu \seq p)~(\Box_s~|~s)}
\quad ({\rm IV.ii}) \vspace{1mm}
$$
The rules may be read as follows. 
\begin{description}
\item[{\rm (I)}] Two points $p$ and $q$, bound to complementary actions on channel $c$, react with $c$ if it is free ($\odot_c$), so that $p$ and $q$ become candidates and the channel becomes busy. 
\item[{\rm (II.i--ii)}] Next, $p$ (and likewise, $q$) reacts with its synchronizer $s$. If the synchronizer is open ($\Box_s$), it now becomes closed ($\boxtimes_s$), and $p$ is declared selected by $s$. If the synchronizer is already closed, then $p$ is rejected. 
\item[{\rm (III.i--iv)}] If both $p$ and $q$ are selected, $c$ confirms the selections to both parties. If only one of them is selected, $c$ cancels that selection. The channel now becomes free. 
\item[{\rm (IV.i--ii)}] If the selection of $p$ is confirmed, the action bound to $p$ is released. 
Otherwise, the synchronizer ``reboots" with fresh names for the points in its domain. 
\end{description}

\subsubsection{Compilation} 
We compile the source language to this machine. Let the symbol $\Pi$ denote finite parallel composition. Suppose that the set of channels in a program $\Pi_{i \in 1..n} S_i$ is $\mathcal C$. We compile this program to the state $\displaystyle\Pi_{c \in \mathcal C} \odot_c~|~\Pi_{i\in 1..n} \widehat{S_i}$,
where
\[
\widehat S = \left\{
\begin{array}{ll}
\alpha  & \mbox{ if }S = \alpha  \\
(\nu \seq{p_j})~(\Box_s~|~s) & \mbox{ if }S = \mathsf{select}(\seq {\alpha_j})\mbox{, where }s = \Pi_{j}~(p_j \mapsto \alpha_j)\mbox{ for fresh names }\seq{p_j}
\end{array}
\right.
\]

\subsubsection{Correctness}
We prove that our protocol is correct, that is, the abstract machine correctly implements selective communication, by showing that the compilation from programs to states preserves progress and safety. Let a \emph{denotation} be a list of actions. 
The denotations of programs and states are derived by the function $\ulcorner \cdot\urcorner$, as follows. \vspace{-3mm}

\parbox{12cm}{
\parbox{6cm}{
\begin{eqnarray*}
\ulcorner S_1~|~\dots~|~S_n \urcorner & = & \ulcorner S_1 \urcorner \uplus \dots \uplus \ulcorner S_n \urcorner \\
\ulcorner \alpha \urcorner & = & [\alpha] \\
\ulcorner \mathsf{select}(\seq \alpha) \urcorner & = & []
\end{eqnarray*}
}
\parbox{5cm}{
\begin{eqnarray*}
\ulcorner \sigma~|~\sigma' \urcorner & = & \ulcorner \sigma \urcorner \uplus \ulcorner \sigma' \urcorner \\
\ulcorner (\nu p)~\sigma \urcorner & = & \ulcorner \sigma \urcorner \\
\ulcorner \varsigma \urcorner & = & 
\left\{
\begin{array}{ll}
[\alpha] & \mbox{ if }\varsigma = \alpha\\
\mbox{$[]$} & \mbox{ otherwise }
\end{array}
\right.
\end{eqnarray*}
}
}
Now, if a program is compiled to some state, then the denotations of the program and the state coincide. Further, we have the following theorem, proved in the appendix.
\begin{theorem}[Correctness]\label{bisim} Let $\mathcal C$ be the set of channels in a program $\Pi_{i \in 1..n} S_i$. Then $\Pi_{i \in 1..n} S_i~\sim~\displaystyle\Pi_{c \in \mathcal C} \odot_c~|~\Pi_{i\in 1..n} \widehat{S_i}$, where $\sim$ is the largest relation such that 
$\mathcal P \sim \sigma$ iff
\begin{description}
\item[{\em (Correspondence)}] $\sigma \rightarrow^\star \sigma'$ for some $\sigma'$ such that $\ulcorner \mathcal P \urcorner = \ulcorner \sigma' \urcorner$;
\item[{\em (Safety)}] if $\sigma \rightarrow \sigma'$ for some $\sigma'$, then $\mathcal P \rightarrow^\star \mathcal P'$ for some $\mathcal P'$ such that $\mathcal P' \sim \sigma'$;
\item[{\em (Progress)}] if $\mathcal P \rightarrow \_$, then $\sigma \rightarrow^+\! \sigma'$ and $\mathcal P \rightarrow \mathcal P'$ for some $\sigma'$ and $\mathcal P'$ such that  $\mathcal P' \sim \sigma'$.\vspace{-1mm}
\end{description}
\end{theorem}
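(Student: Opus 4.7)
The plan is to exhibit a concrete witness relation $\mathcal{R}$ containing every compilation pair, and then to show that $\mathcal{R}$ itself satisfies the three clauses of $\sim$; the theorem then follows by coinduction. I would take $\mathcal{R}$ to consist of all pairs $(\mathcal{P}, \sigma)$ such that $\sigma$ is reachable by zero or more machine transitions from the compilation of some $\mathcal{P}_0$ with $\mathcal{P}_0 \rightarrow^\star \mathcal{P}$. The bookkeeping is controlled by a structural invariant preserved by every rule: each channel $c$ appears exactly once (either as $\odot_c$ or as a unique $\mathsf{Match}_c(p,q)$ whose points sit in distinct synchronizers and are bound to complementary actions on $c$), each point appears in exactly one of its three sub-states, and each synchronizer in exactly one of its four. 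A routine inspection of the eight transition rules confirms invariance.

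For \emph{correspondence}, from any reachable $\sigma$ I would show that all transient tokens can be discharged: every $\mathsf{Candidate}$ reacts via (II.i) or (II.ii); every configuration of the form $\mathsf{Select}/\mathsf{Reject}$ adjacent to $\mathsf{Match}_c$ reduces via one of (III); and $\mathsf{Done}$, $\mathsf{Retry}$ fire via (IV). A suitable well-founded measure on the multiset of non-$\odot$, non-$\Box$, non-married tokens (weighting $\mathsf{Retry}$ above the $\mathsf{Candidate}$'s it may respawn) strictly decreases until $\sigma$ reaches a quiescent form whose denotation coincides with $\ulcorner \mathcal{P} \urcorner$. \emph{Safety} would then follow by case analysis: rules (I), (II), (III.ii--iv), and (IV.ii) preserve $\ulcorner \sigma \urcorner$ and match zero source steps, while (III.i) together with the two subsequent applications of (IV.i) releases complementary actions $c$ and $\overline c$, matching exactly one source reduction $\mathsf{select}(\seq \alpha)~|~\mathsf{select}(\seq \beta) \to c~|~\overline c$. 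The invariant guarantees that (IV.i)'s always come in matching pairs and that their originating synchronizers are $\mathsf{select}$'s in $\mathcal{P}$.

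The hard part is \emph{progress}, which is precisely the situation dramatized by the Land of Fr\={\i}g narrative. Assume $\mathcal{P}$ contains $\mathsf{select}(\seq \alpha)$ and $\mathsf{select}(\seq \beta)$ with $c \in \seq \alpha$ and $\overline c \in \seq \beta$. I would first prove a protocol-level analogue of the narrative's lemma: any $\mathsf{Match}_c$ in $\sigma$ eventually resolves, since its two points must react via (II) and then rules (III) restore $\odot_c$. Hence the machine can always be driven to a state in which (I) fires on points $p, q$ drawn from the two target synchronizers $s, s'$. After (II), either both synchronizers were still open---so (III.i) followed by (IV.i) twice delivers the desired source reduction---or at least one had already closed in service of a \emph{different} commit. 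In the latter case that other commit is itself a legitimate source reduction, so we take $\mathcal{P} \to \mathcal{P}'$ accordingly; the stalled synchronizer then reboots via (IV.ii) with fresh point names, so the argument can be reapplied. Termination of this outer loop follows because each iteration strictly decreases the number of $\mathsf{select}$'s remaining in the source program, and deadlock-freedom of the inner machine loop follows because every transient token enables at least one rule. Making this termination bookkeeping precise, and threading it through the coinductive definition so that each synchronizer reboot lands back in $\mathcal{R}$, is the subtle ingredient of the proof; the remaining verifications are essentially mechanical.
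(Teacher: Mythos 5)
Your overall strategy---structural invariants, a lemma that channels can always be freed, and a scheduling-driven progress argument---is essentially the paper's (its Lemmas~\ref{inv} and \ref{priests} play exactly these roles), but two steps would fail as written. First, the witness relation $\mathcal{R}$ is too loose: relating $\mathcal{P}$ and $\sigma$ merely because they share an ancestor $\mathcal{P}_0$ decouples the source reductions taken to reach $\mathcal{P}$ from the commits performed by the machine run to $\sigma$, and such decoupled pairs violate (Correspondence), which must hold of \emph{every} pair in the relation for the coinduction to go through. Concretely, for $\mathcal{P}_0 = \mathsf{select}(\overline x,\overline y)~|~\mathsf{select}(x)~|~\mathsf{select}(y)$ take $\mathcal{P} = \overline x~|~x~|~\mathsf{select}(y)$ while $\sigma$ is reached from the compilation of $\mathcal{P}_0$ by committing the $y$-communication: every $\sigma'$ reachable from $\sigma$ has $\overline y$ and $y$ in its denotation and can never release $\overline x$ (that point's synchronizer is closed forever, so any later match of it is rejected), hence no $\sigma'$ satisfies $\ulcorner\mathcal{P}\urcorner = \ulcorner\sigma'\urcorner$, yet $(\mathcal{P},\sigma)\in\mathcal{R}$. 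You must couple the two components, e.g.\ relate $\mathcal{P}$ to $\sigma$ only when $\mathcal{P}$ is obtained from $\mathcal{P}_0$ by exactly the source steps corresponding to the commits confirmed by rule (III.i) along the machine run; your ``discharge to quiescence'' reading of (Correspondence) only makes sense once this coupling is in place.

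Second, the progress bookkeeping is off. Closure of a synchronizer only means some $\mathsf{Select}_s(p')$ was issued, not that a commit was confirmed: that selection may meet a $\mathsf{Reject}$, producing $\mathsf{Retry}_s$ and \emph{no} source reduction, so in that branch ``that other commit is itself a legitimate source reduction'' is unjustified, and your termination measure (``the number of $\mathsf{select}$'s remaining in the source program strictly decreases'') does not decrease there---and is in any case beside the point, since (Progress) only requires matching a single source step $\mathcal{P}\rightarrow\mathcal{P}'$. No well-founded measure is needed: because (Progress) asserts the existence of some machine run, you get to choose the schedule. As in the paper, use the channel-freeing lemma to resolve the pending $\mathsf{Match}$es blocking the two relevant synchronizers; each resolution either confirms a commit---in which case (III.i) followed by the two (IV.i) steps releases a \emph{complementary pair} of actions, which is what matches a source reduction (a single released action does not)---or yields $\mathsf{Retry}$ and a reboot via (IV.ii), after which you schedule $p$ and $q$ to match and to be the first to reach their now-open synchronizers, so (III.i) commits them. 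With the coupled relation of the previous paragraph, the resulting pair lands back in the relation and the coinduction closes.
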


%

\subsubsection{Example}
Consider the program $\mathsf{select}(\overline x,\overline y)~|~\mathsf{select}(y,z)~|~\mathsf{select}(\overline z)~|~\mathsf{select}(x)$. The program can reduce either to $\overline x~|~z~|~\overline z~|~x$, or to $\overline y~|~y~|~\mathsf{select}(\overline z)~|~\mathsf{select}(x)$. The denotations of these reduced programs are $\{\overline x,x,\overline z,z\}$ and $\{\overline y,y\}$, respectively. The original program is compiled to the state
\[
\left.
\begin{array}{lcl}
\odot_x~|~\odot_y~|~\odot_z~|~&&(\nu p_{\bar x} p_{\bar y})~(\Box_{(p_{\bar x} \mapsto \overline x~|~p_{\bar y} \mapsto \overline y)}~|~p_{\bar x} \mapsto \overline x~|~p_{\bar y} \mapsto \overline y)~|~\\
&&(\nu p_{y} p_{z})~(\Box_{(p_{y} \mapsto y~|~p_{z} \mapsto z)}~|~p_y \mapsto y~|~p_z \mapsto z)~|~\\
&&(\nu p_{\bar z})~(\Box_{(p_{\bar z} \mapsto \overline z)}~|~p_{\bar z} \mapsto \overline z)~|~\\
&&(\nu p_{x})~(\Box_{(p_{x} \mapsto x)}~|~p_{x} \mapsto x)
\end{array}
\right.
\]
This state can transition in multiple steps to either of the following states, with denotations $\{\overline x,x,\overline z,z\}$ and $\{\overline y,y\}$, respectively. (In these states, $\sigma_{\it junk}$ can be garbage-collected, and is separated out for readability.)
\begin{itemize}
\item $
\overline x~|~z~|~\overline z~|~x~|~
\odot_x~|~\odot_y~|~\odot_z~|~\sigma_{\it junk}
$
\item $\overline y~|~y~|~(\nu p_{\bar z})~(\Box_{(p_{\bar z} \mapsto \overline z)}~|~p_{\bar z} \mapsto \overline z)~|~(\nu p_{x})~(\Box_{(p_{x} \mapsto x)}~|~p_{x} \mapsto x)~|~
\odot_x~|~\odot_y~|~\odot_z~|~\sigma_{\it junk}
$
\end{itemize}
 $$\sigma_{\it junk}~\triangleq~(\nu p_{\bar x} p_{\bar y} p_{y} p_{z} p_{\bar z} p_{x})~(\boxtimes_{(p_{\bar x} \mapsto \overline x~|~p_{\bar y} \mapsto \overline y)}~|~\boxtimes_{(p_{y} \mapsto y~|~p_{z} \mapsto z)}~|~\boxtimes_{(p_{\bar z} \mapsto \overline z)}~|~\boxtimes_{(p_{x} \mapsto x)})$$

\section{A concrete implementation in Concurrent Haskell}\label{compile}
The abstract machine of the previous section can be concretely implemented by a system of communicating processes. Indeed, we now present a complete implementation of a CML-style event library in a fragment of Concurrent Haskell with first-order message passing. This fragment is rather close to the pi calculus. Thus, we ensure that our implementation can be ported without difficulty to languages that support first-order communication. At the same time, we take advantage of Haskell's type system to show how events and event combinators can be typed under the $\type{IO}$ monad \cite{andythesis,impfunprog}.

Before we proceed, we briefly review some of the concurrency primitives in Concurrent Haskell. 
Note that $\type{MVar}$ and $\type{IO}$ are polymorphic type constructors, as follows:
\begin{itemize}
\item The type $\type{MVar}~\tau$ is given to a communication cell that carries values of type $\tau$.
\item The type $\type{IO}~\tau$ is given to a computation that yields results of type $\tau$, with possible side effects via communication.
\end{itemize}
We rely on the following semantics of $\type{MVar}$ cells.
\begin{itemize}
\item A cell can carry at most one value at a time.
\item The function $\mathtt{New} :: \type{IO}~(\type{MVar}~\tau)$ returns a fresh, empty cell.
\item The function $\mathtt{Get}:: \type{MVar}~\tau \rightarrow \type{IO}~\tau$ is used to read from a cell; $\mathtt{Get}\:m$ blocks if the cell $m$ is empty, else gets the content of $m$ (thereby emptying it).
\item The function $\mathtt{Put}:: \type{MVar}~\tau \rightarrow \tau \rightarrow \type{IO}~\type{\unit }$ is used to write to a cell; $\mathtt{Put}\:m\:M$ puts the term $M$ in cell $m$ if it is empty, else blocks.
\end{itemize}
Further, we rely on the following semantics of $\type{IO}$ computations.
\begin{itemize}
\item The function $\mathtt{fork} :: \type{IO}~\type{\unit } \rightarrow \type{IO}~\type{\unit }$ is used to spawn a concurrent computation.
\item The function $\mathtt{return} :: \tau \rightarrow \type{IO}~\type{\tau}$ is used to inject a value into a computation.
\item Computations can be sequentially composed by ``piping". We use Haskell's convenient $\mathtt{do}\{\dots;\dots\}$ notation for this purpose, instead of applying the \emph{de jure}  function $\_\:\hwrp\:\_ :: \type{IO}~\tau \rightarrow (\tau \rightarrow \type{IO}~\tau') \rightarrow \type{IO}~\tau'$.
\end{itemize}
%
%

\subsubsection{Implementing synchronization by message passing}
We implement the following functions for programming with first-class events in Concurrent Haskell. (Note the differences between ML and Haskell types for these functions. Since Haskell is purely functional, we must embed types for computations with possible side-effects via communication, within the $\type{IO}$ monad. Further, since evaluation in Haskell is lazy, we can discard abstractions that only serve to ``delay" evaluation.)
\begin{eqnarray*}
\mathtt{new} & :: & \type{IO}~(\type{channel}~\tau) \\
\mathtt{receive} & :: & \type{channel}~\tau \rightarrow \type{event}~\tau \\
\mathtt{transmit} & :: & \type{channel}~\tau \rightarrow \tau \rightarrow \type{event}~\type{\unit } \\
\mathtt{guard} & :: & \type{IO}~(\type{event}~\tau) \rightarrow \type{event}~\tau \\
\mathtt{wrap} & :: & \type{event}~\tau \rightarrow (\tau \rightarrow \type{IO}~\tau') \rightarrow \type{event}~\tau' \\
\mathtt{choose} & :: & \type{[}\type{event}~\tau\type{]} \rightarrow \type{event}~\tau \\
\mathtt{wrapabort} & :: & \type{IO}~\type{\unit } \rightarrow \type{event}~\tau \rightarrow \type{event}~\tau \\
\mathtt{sync} & :: & \type{event}~\tau \rightarrow \type{IO}~\tau
\end{eqnarray*}
%
For now, 
we focus on events that are built without $\mathtt{wrapabort}$ (\emph{i.e.}, we focus on programs without abort actions); the full implementation appears in Section \ref{sec:wrapabort}. 

We begin by concretizing the abstract state machine in Section \ref{sec:protocol}. Specifically, we run some protocol code at points, channels, and synchronizers, which reduce by simple message passing on $\type{MVar}$ cells. In this implementation:
%
\begin{itemize}
\item Each point is identified with a fresh name $p :: \type{Point}$.
\item Each channel $c$ is identified with a pair of fresh cells ${\it in}^{[c]} :: \type{In}$ and ${\it out}^{[c]} :: \type{Out}$ on which it receives messages from points that are bound to actions on $c$. 
\item Each synchronizer is identified with a fresh cell $s :: \type{Synchronizer}$ on which it receives messages from points in its domain.
\end{itemize}
%
Before we present protocol code, let us describe the sequence of messages exchanged in a typical session of the protocol, and mention the involved sub-states. On the way, we develop type definitions for the $\type{MVar}$ cells on which those messages are exchanged.
\begin{itemize}
\item A point $p$ (at state $p \mapsto c$ or $p \mapsto \overline c$) begins by sending a message to $c$ on its respective input or output cell ${\it in}^{[c]}$ or ${\it out}^{[c]}$; the message contains a fresh cell ${\it candidate}^{[p]} :: \type{Candidate}$ on which $p$ expects a reply from $c$.
$$\left.
\begin{array}{rcccl}
\mathtt{type}~~\type{In} & \!~=~\! & \type{MVar}~\type{Candidate} \\
\mathtt{type}~~\type{Out} & \!~=~\! & \type{MVar}~\type{Candidate} 
\end{array}
\right.\vspace{-1mm}
$$
%
\item When $c$ (at state $\odot_c$) gets a pair of messages on ${\it in}^{[c]}$ and ${\it out}^{[c]}$, say from $p$ and another point $q$, it replies by sending fresh cells ${\it decision}^{[p]} :: \type{Decision}$ and ${\it decision}^{[q]} :: \type{Decision}$ on ${\it candidate}^{[p]}$ and ${\it candidate}^{[q]}$ respectively (reaching state $\mathsf{Match}_c(p,q)$), and expects the synchronizers for $p$ and $q$ to reply on them. 
$$\mathtt{type}~~\type{Candidate}  ~=~  \type{MVar}~\type{Decision}$$
%
\item On receiving a message from $c$ on ${\it candidate}^{[p]}$, $p$ (reaching state $\mathsf{Candidate}_p$) tags the message with its name and forwards it to its synchronizer on the cell $s$.\vspace{-1mm}
$$\mathtt{type}~~\type{Synchronizer} ~=~ \type{MVar}~(\type{Point},\type{Decision})$$
%
\item If $p$ is the first point to send such a message on $s$ (that is, $s$ is at state $\Box_s$), a fresh cell ${\it commit}^{[p]} :: \type{Commit}$ is sent back on ${\it decision}^{[p]}$ (reaching state $\boxtimes_s~|~\mathsf{Select}_s(p)$); for each subsequent message received on $s$, say from $p'$, a blank message is sent back on ${\it decision}^{[p']}$ (reaching state $\boxtimes_s~|~\mathsf{Reject}(p')$). 
$$\mathtt{type}~~\type{Decision} ~=~\type{MVar}~(\type{Maybe}~\type{Commit})\footnote{Recall that the Haskell type $\type{Maybe}~\tau$ is given to a value that is either $\mathtt{Nothing}$, or of the form $\mathtt{Just}~v$ where $v$ is of type $\tau$. The function $\mathtt{maybe} :: \tau' \rightarrow (\tau \rightarrow \tau') \rightarrow \type{Maybe}~\tau \rightarrow \tau'$ is the associated case analyzer. For instance, the function $\mathtt{isJust} :: \type{Maybe}~\tau \rightarrow \type{Bool}$ is defined as $\mathtt{maybe}~\mathtt{False}~(\lambda \_.~\mathtt{True})$.}$$
%
\item On receiving messages from the respective synchronizers of $p$ and $q$ on ${\it decision}^{[p]}$ and ${\it decision}^{[q]}$, $c$ inspects the messages and responds (reaching state $\odot_c$).\vspace{2mm}
\begin{itemize}
\item If both ${\it commit}^{[p]}$ and ${\it commit}^{[q]}$ have come in, a positive signal is sent back on ${\it commit}^{[P]}$ and ${\it commit}^{[Q]}$.
\item If only ${\it commit}^{[p]}$ has come in, a negative signal is sent back on ${\it commit}^{[p]}$; if only ${\it commit}^{[q]}$ has come in, a negative signal is sent back on ${\it commit}^{[q]}$.
\end{itemize}
%
\vspace{2mm}
$$\mathtt{type}~~\type{Commit} ~=~ \type{MVar}~\type{Bool}$$
%
\item If $s$ receives a positive signal on ${\it commit}^{[p]}$ (reaching state $\mathsf{Done}_s(p)$), it signals on $p$ to continue. If, instead, a negative signal is received (reaching state  $\mathsf{Retry}_s$), another session ensues. \vspace{-3mm}
$$\mathtt{type}~~\type{Point} ~=~ \type{MVar}~\type{\unit }$$
\end{itemize}
\paragraph{Protocol code for points}
The protocol code run by points abstracts on a cell $s$ for the associated synchronizer, and a name $p$ for the point itself. Depending on whether the point is for input or output, the code additionally abstracts on an input cell $i$ or output cell $o$, and an input or output action $\alpha$.
\begin{eqnarray*}
&&\mathtt{@PointI} :: \type{Synchronizer} \rightarrow \type{Point} \rightarrow \type{In} \rightarrow \type{IO}~\tau \rightarrow \type{IO}~\tau\\ 
&&\mathtt{@PointI}\:s\:p\:i\:\alpha = \mathtt{do}~\{{\it candidate} \leftarrow \mathtt{New};~\mathtt{Put}\:i\:{\it candidate};\\
&&~~~~~~~~~~~~~~~~~~~~~~~~~~~~~~~~~~~~~\:\:\:{\it decision} \leftarrow \mathtt{Get}\:{\it candidate};~\mathtt{Put}\:s\:(p,{\it decision});\\
&&~~~~~~~~~~~~~~~~~~~~~~~~~~~~~~~~~~~~~\:\:\:\mathtt{Get}\:t;~\alpha\}\\
&&\mathtt{@PointO} :: \type{Synchronizer} \rightarrow \type{Point} \rightarrow \type{Out} \rightarrow \type{IO}~\unit \rightarrow \type{IO}~\unit\\ 
&&\mathtt{@PointO}\:s\:p\:o\:\alpha = \mathtt{do}~\{{\it candidate} \leftarrow \mathtt{New};~\mathtt{Put}\:o\:{\it candidate};\\
&&~~~~~~~~~~~~~~~~~~~~~~~~~~~~~~~~~~~~~\:\:\:{\it decision} \leftarrow \mathtt{Get}\:{\it candidate};~\mathtt{Put}\:s\:(p,{\it decision});\\
&&~~~~~~~~~~~~~~~~~~~~~~~~~~~~~~~~~~~~~\:\:\:\mathtt{Get}\:t;~\alpha\}
\end{eqnarray*}
%
\paragraph{Protocol code for channels}
The protocol code run by channels abstracts on an input cell $i$ and an output cell $o$ for the channel.
\begin{eqnarray*}
\!\!\!&\!&\mathtt{@Chan} :: \type{In} \rightarrow \type{Out} \rightarrow \type{IO}~\type{\unit }\\
\!\!\!&\!&\mathtt{@Chan}\:i\:o = \mathtt{do}~\{{\it candidate}_i \leftarrow \mathtt{Get}\:i;~{\it candidate}_o \leftarrow \mathtt{Get}\:o;\\
\!\!\!&\!&~~~~~~~~~~~~~~~~~~~~~~~~~~\:\:\:\:{\it decision}_i \leftarrow \mathtt{New};~\mathtt{Put}\:{\it candidate}_i\:{\it decision}_i;~x_i \leftarrow \mathtt{Get}\:{\it decision}_i;\\
\!\!\!&\!&~~~~~~~~~~~~~~~~~~~~~~~~~~\:\:\:\:{\it decision}_o \leftarrow \mathtt{New};~\mathtt{Put}\:{\it candidate}_o\:{\it decision}_o;~x_o \leftarrow \mathtt{Get}\:{\it decision}_o;\\
\!\!\!&\!&~~~~~~~~~~~~~~~~~~~~~~~~~~\:\:\:\:\mathtt{maybe}\:(\mathtt{return}\:\unit)\:(\lambda {\it commit}_i.~\mathtt{Put}\:{\it commit}_i\:(\mathtt{isJust}\:x_o))\:x_i; \\
\!\!\!&\!&~~~~~~~~~~~~~~~~~~~~~~~~~~\:\:\:\:\mathtt{maybe}\:(\mathtt{return}\:\unit)\:(\lambda {\it commit}_o.~\mathtt{Put}\:{\it commit}_o\:(\mathtt{isJust}\:x_i))\:x_o \}
\end{eqnarray*}
%
\paragraph{Protocol code for synchronizers}
The protocol code run by synchronizers abstracts on a cell $s$ for that synchronizer and some ``rebooting code" $X$. (Here, we encode a loop with the function $\mathtt{fix} :: (\tau \rightarrow \tau) \rightarrow \tau$; recall that $\mathtt{fix}~f$ reduces to $f~(\mathtt{fix}~f)$.) 
\begin{eqnarray*}
\!\!\!&\!&\mathtt{@Sync} :: \type{Synchronizer} \rightarrow \type{IO}~\type{\unit } \rightarrow \type{IO}~\type{\unit }\\
\!\!\!&\!&\mathtt{@Sync}\:s\:X = \mathtt{do}~\{(p,{\it decision}) \leftarrow \mathtt{Get}\:s;\\
\!\!\!&\!&~~~~~~~~~~~~~~~~~~~~~~~~~~~~\:\:\:\:\mathtt{fork}\:(\mathtt{fix}\:(\lambda {\it iter}.~\mathtt{do}~\{\\
\!\!\!&\!&~~~~~~~~~~~~~~~~~~~~~~~~~~~~\:\:\:\:~~~~~(p',{\it decision}') \leftarrow \mathtt{Get}\:s;~\mathtt{Put}\:{\it decision}'\:\mathtt{Nothing};~{\it iter}\}));\\
\!\!\!&\!&~~~~~~~~~~~~~~~~~~~~~~~~~~~~\:\:\:\:{\it commit} \leftarrow \mathtt{New};~\mathtt{Put}\:{\it decision}\:(\mathtt{Just}\:{\it commit}); \\
\!\!\!&\!&~~~~~~~~~~~~~~~~~~~~~~~~~~~~\:\:\:\:{\it done} \leftarrow \mathtt{Get}\:{\it commit};~\mathtt{if}\:{\it done}\:\mathtt{then}\:(\mathtt{Put}\:p\:\unit)\:\mathtt{else}\:X\}
\end{eqnarray*}
%
%
We instantiate these processes in the translation of $\mathtt{new}$, $\mathtt{receive}$, $\mathtt{transmit}$, and $\mathtt{sync}$ below.
But first, let us translate types for channels and events. 
\subsubsection{Translation of types}
The Haskell types for ML $\type{channel}$ and $\type{event}$ values are:
\begin{eqnarray*}
\type{channel}~\tau & = & (\type{In},\type{Out},\type{MVar}~\tau) \\
\type{event}~\tau & = & \type{Synchronizer} \rightarrow \type{IO}~\tau
\end{eqnarray*}
An ML $\type{channel}$ is a Haskell $\type{MVar}$ tagged with a pair of input and output cells. An ML $\type{event}$ is a Haskell $\type{IO}$ function that abstracts on a synchronizer cell. 

\subsubsection{Translation of functions}
We now translate functions for programming with events. 
We begin by compiling the ML function for creating channels.
\begin{eqnarray*}
&&\mathtt{new} :: \type{IO}~(\type{channel}~\tau) \\
&&\mathtt{new}  = \mathtt{do}~\{i \leftarrow \mathtt{New};~o \leftarrow \mathtt{New};\\
&&~~~~~~~~~~~~~~~~~~\:\mathtt{fork}\:(\mathtt{fix}\:(\lambda {\it iter}.~\mathtt{do}~\{ \mathtt{@Chan}\:i\:o; ~{\it iter} \}));\\
&&~~~~~~~~~~~~~~~~~~\:m \leftarrow \mathtt{New};~\mathtt{return}\:(i,o,m) \}
\end{eqnarray*}
\begin{itemize}
\item The term $\mathtt{new}$ spawns a looping instance of $\mathtt{@Chan}$ with a fresh pair of input and output cells, and returns that pair along with a fresh $\type{MVar}$ cell that carries messages for the channel.
\end{itemize}
Next, we compile the ML combinators for building base communication events. Recall that a Haskell event is an $\type{IO}$ function that abstracts on the cell of its synchronizer.
\begin{eqnarray*}
&&\mathtt{receive} ::  \type{channel}~\tau \rightarrow \type{event}~\tau \\
&&\mathtt{receive}\:(i,o,m) =  \lambda s.~\mathtt{do}~\{p \leftarrow \mathtt{New};~\mathtt{@PointI}\:s\:p\:i\:(\mathtt{Get}\:m)\} 
\end{eqnarray*}
\vspace{-5mm}
\begin{eqnarray*}
&&\mathtt{transmit} ::  \type{channel}~\tau \rightarrow \tau \rightarrow \type{event}~\type{\unit } \\
&&\mathtt{transmit}\:(i,o,m)\:M = \lambda s.~\mathtt{do}~\{p \leftarrow \mathtt{New};~\mathtt{@PointO}\:s\:p\:o\:(\mathtt{Put}\:m\:M)\} 
\end{eqnarray*}
\begin{itemize}
\item The term $\:\mathtt{receive}~c~s\:$ runs an instance of $\mathtt{@PointI}$ with the synchronizer $s$, a fresh name for the point, the input cell for channel $c$, and an action that inputs on $c$.
\item The term $\:\mathtt{transmit}\:c\:M\:s\:$ is symmetric; it runs an instance of $\mathtt{@PointO}$ with the synchronizer $s$, a fresh name for the point, the output cell for channel $c$, and an action that outputs term $M$ on~$c$.
\end{itemize}
Next, we compile the ML event combinators for specifying actions that are run before and after synchronization. 
\begin{eqnarray*}
&&\mathtt{guard} :: \type{IO}~(\type{event}~\tau) \rightarrow \type{event}~\tau \\
&&\mathtt{guard}\:f = \lambda s.~\mathtt{do}~\{v \leftarrow f;~v\:s\}
\end{eqnarray*}
\vspace{-5mm}
\begin{eqnarray*}
&&\mathtt{wrap} :: \type{event}~\tau \rightarrow (\tau \rightarrow \type{IO}~\tau') \rightarrow \type{event}~\tau' \\
&&\mathtt{wrap}\:v\:f = \lambda s.~\mathtt{do}~\{x \leftarrow v\:s;~f~x\} 
\end{eqnarray*}
\begin{itemize}
\item The term $\:\mathtt{guard}\:f\:s\:$ runs the computation $f$ and passes the synchronizer cell $s$ to the event returned by the computation.
\item The term $\:\mathtt{wrap}\:v\:f\:s\:$ passes the synchronizer cell $s$ to the event $v$ and pipes the returned value to function $f$. 
\end{itemize}
Next, we compile the ML combinator for choosing among a list of events. (Here, we encode recursion over a list with the function $\mathtt{fold} :: (\tau' \rightarrow \tau \rightarrow \tau') \rightarrow \tau' \rightarrow [\tau] \rightarrow \tau'$; recall that $\mathtt{fold}~f~x~[]$ reduces to $x$ and $\mathtt{fold}~f~x~[v,V]$ reduces to $\mathtt{fold}~f~(f~x~v)~V$.)
\begin{eqnarray*}
&&\mathtt{choose} :: \type{[}\type{event}~\tau\type{]} \rightarrow \type{event}~\tau \\
&&\mathtt{choose}\:V = \lambda s.~\mathtt{do}~\{{\it temp} \leftarrow \mathtt{New};\\
&&~~~~~~~~~~~~~~~~~~~~~~~~~~~~~~~~~\:\:\:\mathtt{fold}\:(\lambda \_~v.~\mathtt{fork}\:(\mathtt{do}~\{x \leftarrow v\:s;~\mathtt{Put}\:{\it temp}\:x\}))\:\unit\:V; \\
&&~~~~~~~~~~~~~~~~~~~~~~~~~~~~~~~~~\:\:\:\mathtt{Get}\:{\it temp} \}
\end{eqnarray*}
\begin{itemize}
\item The term $\:\mathtt{choose}\:V\:s\:$ spawns a thread for each event $v$ in $V$, passing the synchronizer $s$ to $v$; any value returned by one of these threads is collected in a fresh cell ${\it temp}$ and returned. 
\end{itemize}
Finally, we compile the ML function for event synchronization.
\begin{eqnarray*}
&&\mathtt{sync} :: \type{event}~\tau \rightarrow \type{IO}~\tau \\
&&\mathtt{sync}\:v = \mathtt{do}~\{{\it temp} \leftarrow \mathtt{New};\\
&&~~~~~~~~~~~~~~~~~~~~~\:\:\:\mathtt{fork}\:(\mathtt{fix}\:(\lambda {\it iter}.~\mathtt{do}~\{\\
&&~~~~~~~~~~~~~~~~~~~~~\:\:\:~~~~~s \leftarrow \mathtt{New};~\mathtt{fork}\:(\mathtt{@Sync}\:s\:{\it iter});~x \leftarrow v\:s;~\mathtt{Put}\:{\it temp}\:x \}));\\
&&~~~~~~~~~~~~~~~~~~~~~\:\:\:\mathtt{Get}\:{\it temp} \}
\end{eqnarray*}
\begin{itemize}
\item The term $\:\mathtt{sync}\:v\:$ recursively spawns an instance of $\mathtt{@Sync}$ with a fresh synchronizer $s$ and passes $s$ to the event $v$; any value returned by one of these instances is collected in a fresh cell ${\it temp}$ and returned. 
\end{itemize}

\section{Compiling abort actions}\label{sec:wrapabort}

The implementation of the previous section does not account for $\mathtt{wrapabort}$. We now show how $\mathtt{wrapabort}$ can be handled by slightly extending our notion of $\type{event}$.

Recall that 
abort actions (such as those specified by $\mathtt{wrapabort}$) are spawned only at events that do not enclose the commit point. Therefore, in an implementation of $\mathtt{wrapabort}$, it makes sense to name events with the sets of points they enclose. 
However, computing the set of points that an event encloses should not interfere with the dynamic semantics. In particular, for an event built with $\mathtt{guard}$, we cannot run the $\mathtt{guard}$ functions to compute the set of points that the event encloses. Thus, we refrain from naming events at compile time. Instead, we introduce events as principals in our protocol; each event is named \emph{in situ} by computing the list of points it encloses at runtime. This list is carried on a fresh cell ${\it name} :: \type{Name}$ for the event. 
$$\mathtt{type}~~\type{Name} ~=~ \type{MVar}~[\type{Point}]$$
Further, each synchronizer carries a fresh cell ${\it abort} :: \type{Abort}$ on which it receives $\mathtt{wrapabort}$ functions from events, tagged with the list of points they enclose.
$$\mathtt{type}~~\type{Abort} ~=~ \type{MVar}~([\type{Point}], \type{IO}~\type{\unit})$$
Protocol code run by points and channels remain the same. We add a handler for $\mathtt{wrapabort}$ functions to the protocol code run by synchronizers. Accordingly, the code now abstracts on an ${\it abort}$ cell. 
\begin{eqnarray*}
\mathtt{@Sync} & :: & \type{Synchronizer} \rightarrow \type{Abort} \rightarrow \type{IO}~\type{\unit } \rightarrow \type{IO}~\type{\unit }\\
\mathtt{@Sync}\:s\:{\it abort}\:X & = & \mathtt{do}~\{\dots;\\
&&~~~~~~~\mathtt{if}\:{\it done}\:\mathtt{then}\:\mathtt{do}~\{\dots;\\
&&~~~~~~~~~~~~~~~~~~~~~~~~~~~~~~~~~~~~~~~\mathtt{fix}\:(\lambda {\it iter}.~\mathtt{do}~\{\\
&&~~~~~~~~~~~~~~~~~~~~~~~~~~~~~~~~~~~~~~~~~~~~(P,f) \leftarrow \mathtt{Get}\:{\it abort};~\mathtt{fork}\:{\it iter};\\
&&~~~~~~~~~~~~~~~~~~~~~~~~~~~~~~~~~~~~~~~~~~~~\mathtt{if}\:p \in P\:\mathtt{then}\:\mathtt{return}\:\unit\:\mathtt{else}\:f\})\}\\
&&~~~~~~~\mathtt{else}\:\dots \}
\end{eqnarray*}
Here, after signaling the commit point $p$ to continue (as earlier), the synchronizer continues to accept abort code $f$ on ${\it abort}$; such code is spawned only if the list of points $P$, enclosed by the event that sends that code, does not include $p$.

The extended Haskell type for $\type{event}$ values is as follows. 
\begin{eqnarray*}
\type{event}~\tau & = & \type{Synchronizer} \rightarrow \type{Name}  \rightarrow \type{Abort} \rightarrow \type{IO}~\tau
\end{eqnarray*}
Now, an ML $\type{event}$ is a Haskell $\type{IO}$ function that abstracts on a synchronizer, an abort cell, and a name cell that carries the list of points the event encloses. 

The Haskell function $\mathtt{new}$ does not change. We highlight minor changes in the remaining translations. We begin with the functions $\mathtt{transmit}$ and $\mathtt{receive}$. An event built with either function is named by a singleton containing the name of that point. \vspace{-1mm}
\begin{eqnarray*}
\mathtt{transmit}\:(i,o,m)\:M & = & \lambda s\:{\it name}\:{\it abort}.~\mathtt{do}~\{\dots;~ \mathtt{fork}\:(\mathtt{Put}\:{\it name}\:[p]);~ \dots\} \\
\mathtt{receive}\:(i,o,m) & = & \lambda s\:{\it name}\:{\it abort}.~\mathtt{do}~\{\dots;~ \mathtt{fork}\:(\mathtt{Put}\:{\it name}\:[p]);~ \dots\} 
\end{eqnarray*}
The function $\mathtt{choose}$ becomes slightly lengthy. A fresh ${\it name}'$ cell is passed to each event in the list; the names of those events are concatenated to name the $\mathtt{choose}$ event.\vspace{-2mm}
\begin{eqnarray*}
\!\!\!&&\mathtt{choose}\:V = \lambda s\:{\it name}\:{\it abort}.~\mathtt{do}~\{\dots; \\ 
\!\!\!&&~~~~~~~~~~~~~~~~~~~~~~~~~~~~~~~~~~~~~~~~~~~~~~~~~~~~~~~~P \leftarrow \mathtt{fold}\:(\lambda P\:v.~\mathtt{do}~\{\\
\!\!\!&&~~~~~~~~~~~~~~~~~~~~~~~~~~~~~~~~~~~~~~~~~~~~~~~~~~~~~~~~~~~~~~~~~~~~~~~~{\it name}' \leftarrow \mathtt{New};\\
\!\!\!&&~~~~~~~~~~~~~~~~~~~~~~~~~~~~~~~~~~~~~~~~~~~~~~~~~~~~~~~~~~~~~~~~~~~~~~~~\mathtt{fork}\:(\mathtt{do}~\{x \leftarrow v\:s\:{\it name}'\:{\it abort};~\dots\});\\
\!\!\!&&~~~~~~~~~~~~~~~~~~~~~~~~~~~~~~~~~~~~~~~~~~~~~~~~~~~~~~~~~~~~~~~~~~~~~~~~P' \leftarrow \mathtt{Get}\:{\it name}';~\mathtt{Put}\:{\it name}'\:P';\\
\!\!\!&&~~~~~~~~~~~~~~~~~~~~~~~~~~~~~~~~~~~~~~~~~~~~~~~~~~~~~~~~~~~~~~~~~~~~~~~~\mathtt{return}\:(P' \uplus P)\})~[]\:V; \\
\!\!\!&&~~~~~~~~~~~~~~~~~~~~~~~~~~~~~~~~~~~~~~~~~~~~~~~~~~~~~~~~\mathtt{fork}\:(\mathtt{Put}\:{\it name}\:P);\\
\!\!\!&&~~~~~~~~~~~~~~~~~~~~~~~~~~~~~~~~~~~~~~~~~~~~~~~~~~~~~~~~\dots \} 
\end{eqnarray*}
We now compile the ML event combinator for specifying abort actions. \vspace{-2mm}
\begin{eqnarray*}
&&\mathtt{wrapabort} :: \type{IO}~\type{\unit} \rightarrow \type{event}~\tau \rightarrow \type{event}~\tau \\
&&\mathtt{wrapabort}\:f\:v = \lambda s\:{\it name}\:{\it abort}.~\mathtt{do}~\{\\
&&~~~~~~~~~~\mathtt{fork}\:(\mathtt{do}~\{P \leftarrow \mathtt{Get}\:{\it name};~\mathtt{Put}\:{\it name}\:P;~\mathtt{Put}\:{\it abort}\:(P,f)\});\\
&&~~~~~~~~~~v\:s\:{\it name}\:{\it abort} \}
\end{eqnarray*}
\begin{itemize}
\item The term $\:\mathtt{wrapabort}\:f\:v\:s\:{\it name}\:{\it abort}\:$ spawns a thread that reads the list of enclosed events $P$ on the cell ${\it name}$ and sends the function $f$ along with $P$ on the cell  ${\it abort}$; the synchronizer $s$ is passed to the event $v$ along with ${\it name}$ and ${\it abort}$.
\end{itemize}
The functions $\mathtt{guard}$ and $\mathtt{wrap}$ remain similar.\vspace{-2mm}
\begin{eqnarray*}
\mathtt{guard}\:f & = & \lambda s\:{\it name}\:{\it abort}.~\mathtt{do}~\{v \leftarrow f;~v\:s\:{\it name}\:{\it abort}\}\\
\mathtt{wrap}\:v\:f & = & \lambda s\:{\it name}\:{\it abort}.~\mathtt{do}~\{x \leftarrow v\:s\:{\it name}\:{\it abort}; f~x\} 
\end{eqnarray*}
Finally, in the function $\mathtt{sync}$, a fresh ${\it abort}$ cell is now passed to $\mathtt{@Sync}$, and a fresh ${\it name}$ cell is created for the event to be synchronized.\pagebreak
\begin{eqnarray*}
&&\mathtt{sync}\:v = \mathtt{do}~\{\dots; \\
&&~~~~~~~~~~~~~~~~~~~~~~~~\mathtt{fork}\:(\mathtt{fix}\:(\lambda {\it iter}.~\mathtt{do}~\{\\
&&~~~~~~~~~~~~~~~~~~~~~~~~~~~~~\dots; ~{\it name} \leftarrow \mathtt{New}; ~{\it abort} \leftarrow \mathtt{New};\\
&&~~~~~~~~~~~~~~~~~~~~~~~~~~~~~ \mathtt{fork}\:(\mathtt{@Sync}\:s\:{\it abort}\:{\it iter});~ x \leftarrow v\:s\:{\it name}\:{\it abort}; ~\dots\}));\\
&&~~~~~~~~~~~~~~~~~~~~~~~~\dots \}
\end{eqnarray*}
%




\section{Implementing communication guards}\label{guards}
Beyond the standard primitives for communication in CML, some previous implementations of events further consider \emph{guarded communication}.  We discuss how our implementation can be easily extended to handle such communication. Specifically, we require the following $\mathtt{receive}$ combinator, that can carry a communication guard.
$$\mathtt{receive} ::  \type{channel}~\tau \rightarrow (\tau \rightarrow \type{Bool}) \rightarrow \type{event}~\tau$$
Intuitively, $\:\mathtt{receive}~c~{\it cond}\:$ synchronizes with $\:\mathtt{transmit}~c~M\:$ only if $\:{\it cond}\:M\:$ is true. In our implementation, we make some slight adjustments to the types of some $\type{MVar}$ cells.
$$\left.
\begin{array}{rclcl}
\mathtt{type}~~\type{In}~\tau & \!~=~\! & \type{MVar}~(\type{Candidate},\tau \rightarrow \type{Bool}) \\
\mathtt{type}~~\type{Out}~\tau & \!~=~\! & \type{MVar}~(\type{Candidate}, \tau) \\ 
\mathtt{type}~~\type{Candidate} & \!~=~\! & \type{MVar}~(\type{Maybe}~\type{Decision}) 
\end{array}
\right.
$$
Next, we adjust the protocol code run by points and channels. Input and output points that are bound to actions on $c$ respectively send their conditions and messages to $c$. A pair of points is matched only if the message of one satisfies the condition of the other.
\begin{eqnarray*}
\!\!\!&\!&\mathtt{@Chan} :: \type{In}~\tau \rightarrow \type{Out}~\tau \rightarrow \type{IO}~\type{\unit }\\
\!\!\!&\!&\mathtt{@Chan}\:i\:o = \mathtt{do}~\{({\it candidate}_i,{\it cond}) \leftarrow \mathtt{Get}\:i;~({\it candidate}_o,M) \leftarrow \mathtt{Get}\:o;\\
\!\!\!&\!&~~~~~~~~~~~~~~~~~~~~~~~~~~\:\:\:\:\mathsf{if}~({\it cond}\:M)~\mathsf{then}~\mathtt{do}~\{ \\
\!\!\!&\!&~~~~~~~~~~~~~~~~~~~~~~~~~~\:\:\:\:~~~~~~~~~~\dots;~\mathtt{Put}\:{\it candidate}_i\:(\mathtt{Just}\:{\it decision}_i);~\dots;\\
\!\!\!&\!&~~~~~~~~~~~~~~~~~~~~~~~~~~\:\:\:\:~~~~~~~~~~\dots;~\mathtt{Put}\:{\it candidate}_o\:(\mathtt{Just}\:{\it decision}_o);~\dots;\\
\!\!\!&\!&~~~~~~~~~~~~~~~~~~~~~~~~~~\:\:\:\:~~~~~~~~~~\dots \}\\
\!\!\!&\!&~~~~~~~~~~~~~~~~~~~~~~~~~~\:\:\:\:\mathsf{else}~\mathtt{do}~\{\mathtt{Put}\:{\it candidate}_i\:\mathtt{Nothing};~\mathtt{Put}\:{\it candidate}_i\:\mathtt{Nothing}\} \} \\
&&\mathtt{@PointI} :: \type{Synchronizer} \rightarrow \type{Point} \rightarrow \type{In}~\tau \rightarrow (\tau \rightarrow \type{Bool}) \rightarrow \type{IO}~\tau \rightarrow \type{IO}~\tau\\ 
&&\mathtt{@PointI}\:s\:p\:i\:{\it cond}\:\alpha = \mathtt{do}~\{\dots;~\mathtt{Put}\:i\:({\it candidate},{\it cond});\\
&&~~~~~~~~~~~~~~~~~~~~~~~~~~~~~~~~~~~~~~~~~~~~\:\:\:\:\:\:x \leftarrow \mathtt{Get}\:{\it candidate}; \\
&&~~~~~~~~~~~~~~~~~~~~~~~~~~~~~~~~~~~~~~~~~~~~~\:\:\:\:\:\mathtt{maybe}~(\mathtt{@PointI}\:s\:p\:i\:{\it cond}\:\alpha) \\
&&~~~~~~~~~~~~~~~~~~~~~~~~~~~~~~~~~~~~~~~~~~~~~\:\:\:~~~~~\:\:\:\:\:~~~~(\lambda {\it decision}.~\mathtt{do}~\{\mathtt{Put}\:s\:(p,{\it decision});~\dots\})~x\}\\
&&\mathtt{@PointO} :: \type{Synchronizer} \rightarrow \type{Point} \rightarrow \type{Out}~\tau \rightarrow \tau \rightarrow \type{IO}~\unit \rightarrow \type{IO}~\unit\\ 
&&\mathtt{@PointO}\:s\:p\:o\:M\:\alpha = \mathtt{do}~\{\dots;~\mathtt{Put}\:o\:({\it candidate},M);\\
&&~~~~~~~~~~~~~~~~~~~~~~~~~~~~~~~~~~~~~~~~~~~~\:\:\:\:\:\:x \leftarrow \mathtt{Get}\:{\it candidate}; \\
&&~~~~~~~~~~~~~~~~~~~~~~~~~~~~~~~~~~~~~~~~~~~~~\:\:\:\:\:\mathtt{maybe}~(\mathtt{@PointO}\:s\:p\:o\:M\:\alpha) \\
&&~~~~~~~~~~~~~~~~~~~~~~~~~~~~~~~~~~~~~~~~~~~~~\:\:\:~~~~~\:\:\:\:\:~~~~(\lambda {\it decision}.~\mathtt{do}~\{\mathtt{Put}\:s\:(p,{\it decision});~\dots\})~x\}
\end{eqnarray*}
%
%
Finally, we make the following trivial adjustments to the type constructor $\type{channel}$, and the functions $\mathtt{receive}$ and $\mathtt{transmit}$.
$$\mathtt{type}~~\type{channel}~\tau ~=~ (\type{In}~\tau,\type{Out}~\tau,\type{MVar}~\tau) \vspace{-2.5mm}$$
\begin{eqnarray*}
&&\mathtt{receive}\:(i,o,m)\:{\it cond} =  \lambda s\:{\it name}\:{\it abort}.~\mathtt{do}~\{\dots;~\mathtt{@PointI}\:s\:p\:i\:{\it cond}\:(\mathtt{Get}\:m)\} \\
&&\mathtt{transmit}\:(i,o,m)\:M = \lambda s\:{\it name}\:{\it abort}.~\mathtt{do}~\{\dots;~\mathtt{@PointO}\:s\:p\:o\:M\:(\mathtt{Put}\:m\:M)\} 
\end{eqnarray*}

\section{Related work}\label{sec:relwork}
We are not the first to implement CML-style concurrency primitives in another language. 
In particular, Russell presents an implementation of events in Concurrent Haskell in \cite{russell}. The implementation provides \emph{guarded channels}, which filter communication based on conditions on message values  (as in Section \ref{guards}). 
Unfortunately, the implementation requires a rather complex Haskell type for $\type{event}$ values. In particular, a value of type $\type{event}~\tau$ must carry (among other things) a continuation of type $\type{IO}~\tau \rightarrow \type{IO}~\type{\unit}$. 
An important difference between Russell's implementation and ours is that Russell's $\mathtt{choose}$ combinator is asymmetric. In contrast, we implement a symmetric $\mathtt{choose}$ combinator, following the standard CML semantics. While it is difficult to compare other aspects of our implementations, we should point out that Russell's event library is more than 1300 lines of Haskell code (without comments), compared to our 150. 
Yet, guarded communication in the sense of Russell can be readily implemented in our setting, as shown in Section \ref{guards}.  
In the end, we believe that this difference in complexity is largely due to the elegance of our synchronization protocol.

Recently, Donnelly and Fluet \cite{transevents} introduce \emph{transactional events} and implement them over the software transactional memory (STM) module in Concurrent Haskell. Their key observation is that combining all-or-nothing transactions with CML-style concurrency recovers a monad. Unfortunately, implementing transactional events requires solving NP-hard problems \cite{transevents}. In contrast, our direct implementation of CML-style concurrency remains rather lightweight.

Other implementations of events include those of Flatt and Findler in Scheme~\cite{flatt-findler} and of Demaine in Java \cite{demaine}. While Flatt and Findler focus on kill-safety, Demaine focuses on efficiency by exploiting communication patterns that involve either single receivers or single senders. Demaine does not consider event combinators---in particular, it is not clear whether his implementation can accommodate abort actions.

Distributed protocols for implementing selective communication date back to the 1980s. The protocols of Buckley and Silberschatz \cite{buckley-silberschatz} and Bagrodia \cite{bagrodia} seem to be among the earliest in this line of work. Unfortunately, those protocols are prone to deadlock. Bornat \cite{bornat} proposes a protocol that is deadlock-free assuming communication between single receivers and single senders. Finally in \cite{knabe}, Knabe presents the first deadlock-free protocol to implement selective communication for arbitrary channel communication. Knabe's protocol appears to be the closest to ours. Channels are considered as sites of control, and messages are exchanged between communication points and channels to negotiate synchronization. However, Knabe assumes a global ordering on processes and maintains queues for matching points; we do not require either of these facilities in our protocol. Moreover, as in \cite{demaine}, it is not clear whether the protocol can accommodate event combinators such as $\mathtt{guard}$ and $\mathtt{wrapabort}$.



\section{Conclusion}
In this paper, we show how to implement first-class event synchronization in Concurrent Haskell, a language with first-order message passing. We appear to be the first to implement the standard semantics for events and event combinators in this setting. An interesting consequence of our work is that implementing distributed selective communication is reduced to implementing distributed message-passing in Concurrent Haskell. At the heart of our implementation is a new deadlock-free protocol that is run among communication points, channels, and synchronization sites. This protocol seems to be robust enough to allow implementations of sophisticated synchronization primitives.

All the code presented in this paper is available online at 
$$\mbox{\url{http://www.soe.ucsc.edu/~avik/projects/CML}}$$ 











\paragraph{Acknowledgments}
This work started off as a class project for Cormac Flanagan's course on \emph{Concurrent Programming and Transactional Memory} at UC Santa Cruz in Spring 2007. Thanks to him, Mart\'in Abadi, and Andy Gordon for their enthusiasm and curiosity, which encouraged further development of this work since that course. 

\bibliographystyle{abbrv}
\bibliography{ref}

\appendix


\section{Correctness proof for the synchronization protocol}

In this appendix, we prove correctness of the synchronization protocol (Theorem \ref{bisim}). This proof closely follows the informal proof of progress in the Land of Fr\={\i}g. 

Consider any state $\sigma$. We begin by defining some invariants that $\sigma$ must satisfy; the satisfaction of these invariants is written as $\vdash \sigma$. Let $\sigma$ be in the form $(\nu P)~\Pi \seq\varsigma$. We assume that the set of points in $\seq \varsigma$ is $P$, the set of channels in $\seq \varsigma$ is $\mathcal C$, the set of synchronizers in $\seq \varsigma$ is $\mathcal S$, and $\mathcal C \cap P = \varnothing$. Then $\vdash \sigma$ if:

\begin{enumerate}
\item For any $p \in \mathcal P$, there is a unique $s \in \mathcal S$ such that $p \in \dom(s)$. Further, let $s(p) = \alpha$. Then at most one of the following sub-states is in $\seq\varsigma$:
$$\{p \mapsto \alpha,\mathsf{Candidate}_p,\mathsf{Select}_s(p),\mathsf{Reject}(p),\mathsf{Done}_s(p),\mathsf{Retry}_s\}$$
and exactly one of the following sub-states is in $\seq\varsigma$:
$$\{\Box_s,\boxtimes_s\}$$
\item For every $c \in \mathcal C$, exactly one of the following sub-states is in $\seq\varsigma$:
$$\{\odot_c,\mathsf{Match}_c(\_,\_)\}$$
\item Let $p \in P$ and $s \in \mathcal S$ such that $p \in \dom(s)$. Then, if one of the following sub-states is in $\seq\varsigma$:
$$\{\mathsf{Select}_s(p),\mathsf{Reject}(p),\mathsf{Done}_s(p),\mathsf{Retry}_s\}$$
then $\boxtimes_s$ is in $\seq\varsigma$.
On the other hand, if $\boxtimes_s$ is in $\seq\varsigma$ then there is at most one $p$ such that $p \in \dom(s)$ and one of the following sub-states is in $\seq\varsigma$:
$$\{\mathsf{Select}_s(p),\mathsf{Done}_s(p),\mathsf{Retry}_s\}$$
\item Let $c \in \mathcal C$ and $p,q \in P$. Then $\mathsf{Match}_c(p,q)$ is in $\seq\varsigma$ iff there are (not necessarily distinct) $s,s' \in \mathcal S$ such that $s(p) = c$, $s'(q) = \overline c$, 
one of the following sub-states is in $\seq\varsigma$:
$$\{\mathsf{Candidate}_p,\mathsf{Select}_s(p),\mathsf{Reject}(p)\}$$ 
and one of the following sub-states is in $\seq\varsigma$:
$$\{\mathsf{Candidate}_q,\mathsf{Select}_{s'}(q),\mathsf{Reject}(q)\}$$
\end{enumerate}

It is easy to see that (1--4) are invariants for any state compiled from a program.
\begin{lemma}\label{inv} Let $\mathcal C$ be the set of channels in a program $\Pi_{i \in 1..n} S_i$. Now, suppose that $\displaystyle\Pi_{c \in \mathcal C} \odot_c~|~\Pi_{i\in 1..n} \widehat{S_i} \rightarrow^\star \sigma$. Then $\vdash\sigma$. 
\end{lemma}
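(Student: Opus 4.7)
}
The plan is to proceed by induction on the length of the reduction $\displaystyle\Pi_{c \in \mathcal C} \odot_c~|~\Pi_{i\in 1..n} \widehat{S_i} \rightarrow^\star \sigma$. For the base case I would unfold the definition of $\widehat{S_i}$: each summand is either a bare action $\alpha$ (contributing nothing to the sub-states enumerated in (1)--(4)) or $(\nu \seq{p_j})~(\Box_s~|~s)$, where $s = \Pi_{j}~(p_j \mapsto \alpha_j)$. Since the names $\seq{p_j}$ are chosen fresh for each $\widehat{S_i}$, the synchronizer domains are pairwise disjoint, so every point $p$ sits in a unique $s$ and appears only in the sub-state $p \mapsto \alpha$; the only synchronizer state present is the open box $\Box_s$; every channel $c \in \mathcal C$ appears exactly as $\odot_c$; and none of the sub-states mentioning $\mathsf{Match}_c$, $\mathsf{Candidate}_p$, $\mathsf{Select}_s(p)$, $\mathsf{Reject}(p)$, $\mathsf{Done}_s(p)$, or $\mathsf{Retry}_s$ appear. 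Hence (1)--(4) hold vacuously or trivially.

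For the inductive step I would assume $\vdash \sigma$ and show $\vdash \sigma'$ for every $\sigma \rightarrow \sigma'$. This is a case analysis on the nine local rules (I), (II.i), (II.ii), (III.i--iv), (IV.i), (IV.ii), together with an easy observation that the structural rules for $|$ and $(\nu p)$ preserve each invariant on the nose. For each rule I would verify that the sub-states consumed on the left-hand side are precisely those required to exist by (1)--(4), and that the sub-states produced on the right-hand side maintain the constraints. For instance, rule (I) consumes $p \mapsto c$, $q \mapsto \overline c$, $\odot_c$ and produces $\mathsf{Candidate}_p$, $\mathsf{Candidate}_q$, $\mathsf{Match}_c(p,q)$: invariant (1) is preserved because $p$ (resp.\ $q$) transitions from its unique ``unmatched'' sub-state to its unique ``candidate'' sub-state while $\Box_s$ remains open; invariant (2) swaps $\odot_c$ for $\mathsf{Match}_c(p,q)$; invariant (4) is newly satisfied since $\mathsf{Candidate}_p$ and $\mathsf{Candidate}_q$ are now both present, matched by $s(p)=c$ and $s'(q)=\overline c$; invariant (3) is untouched. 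Rules (II.i)--(II.ii) flip $\Box_s$ to $\boxtimes_s$ and convert $\mathsf{Candidate}_p$ into $\mathsf{Select}_s(p)$ or $\mathsf{Reject}(p)$, which is exactly the at-most-one constraint in (3); the matching side of (4) is maintained because $\mathsf{Candidate}_p$ is replaced by another sub-state from the same alternation. Rules (III.i)--(III.iv) eliminate $\mathsf{Match}_c(p,q)$ together with the two resolved sub-states of $p$ and $q$ and reinstate $\odot_c$, producing either $\mathsf{Done}$ markers or $\mathsf{Retry}_s$, which continue to live under $\boxtimes_s$ as required by (3). Finally, rule (IV.i) turns $\mathsf{Done}_s(p)$ into the bare action $\alpha$, restoring the ``unmatched/married'' alternation of (1); and rule (IV.ii) rewrites $\mathsf{Retry}_s$ as $(\nu \seq p)~(\Box_s~|~s)$, which rebinds the whole synchronizer in a state indistinguishable from the base case of the induction.

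The only slightly delicate piece of bookkeeping is invariant (4), which links the presence of $\mathsf{Match}_c(p,q)$ to the current status of the two points it is matching. For this I would phrase the inductive check as: the biconditional in (4) is preserved because every rule that creates or destroys $\mathsf{Match}_c(p,q)$ (namely (I) and the (III) rules) simultaneously creates or destroys the pair of point sub-states on the right-hand side of the biconditional; and every rule that does not touch $\mathsf{Match}_c(p,q)$ (namely (II.i), (II.ii), (IV.i), (IV.ii)) either keeps both point sub-states in the alternation $\{\mathsf{Candidate}_p, \mathsf{Select}_s(p), \mathsf{Reject}(p)\}$ or operates on a synchronizer whose relevant matches have already been resolved, so the biconditional is vacuously preserved. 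The main obstacle is simply the size of this case analysis: nine rules times four invariants, each requiring a small algebraic check. A concise way to tame this is to introduce, once and for all, a table indexed by rule and invariant and fill in the two-line justification in each cell, so that the proof becomes mostly mechanical verification after the invariants themselves are stated correctly.
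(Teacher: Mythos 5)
Your plan is correct and is exactly the routine argument the paper leaves implicit: the paper offers no proof of this lemma beyond the remark that the invariants are "easy to see", and the intended justification is precisely your induction on the length of the reduction, with a base-case check of the compiled state and a rule-by-rule verification that (I), (II.i--ii), (III.i--iv), (IV.i--ii) and the structural rules preserve (1)--(4). The only point worth tightening is the reboot case (IV.ii): the old $\boxtimes_s$ persists as junk, so you should note explicitly that the $\nu$-bound fresh points make the rebooted synchronizer a distinct one (as in the paper's $\sigma_{\it junk}$ example), which is what keeps the "exactly one of $\Box_s,\boxtimes_s$" clause of invariant (1) intact; likewise, in case (I) the enclosing synchronizer need not still be open, but the invariant only needs the $\Box_s/\boxtimes_s$ alternation to be untouched, so the argument is unaffected.
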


Next, we prove the analog of the lemma in Section \ref{intro}. 
\begin{lemma}\label{priests} Suppose that $\vdash (\nu P)~\Pi\seq\varsigma$. Let $p,q \in P$, and let $p\mapsto c$ and $q \mapsto \overline c$ be in $\seq\varsigma$. Then $(\nu P)~\Pi\seq\varsigma \longrightarrow^\star (\nu P)\Pi\seq{\varsigma'}$ such that $\odot_c$ is in $\seq{\varsigma'}$.
\end{lemma}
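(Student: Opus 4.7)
The plan is to mirror the informal argument made in the narrative of Section~\ref{intro}: either $c$ is already free, or a pair of partners has been matched through $c$ and we can drive them to completion so as to free $c$. Concretely, I would proceed by case analysis on the sub-state of channel $c$, which is uniquely determined by invariant~(2) of $\vdash(\nu P)~\Pi\seq\varsigma$. If $\odot_c \in \seq\varsigma$ the conclusion holds with zero reduction steps. Otherwise $\mathsf{Match}_c(p',q') \in \seq\varsigma$ for some points $p',q'$ (not necessarily the $p,q$ of the hypothesis, which by invariant~(1) must be distinct from $p',q'$ since $p\mapsto c$ and $q\mapsto\overline c$ preclude any $\mathsf{Candidate}$/$\mathsf{Select}$/$\mathsf{Reject}$ sub-state for $p,q$). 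Invariant~(4) then guarantees the existence of synchronizers $s_1,s_2$ with $s_1(p')=c$ and $s_2(q')=\overline c$, each of $p',q'$ being in a sub-state drawn from $\{\mathsf{Candidate},\mathsf{Select}_{s_i},\mathsf{Reject}\}$.

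The next step is to drive each of $p',q'$ that is still a $\mathsf{Candidate}$ into a $\mathsf{Select}$ or $\mathsf{Reject}$ sub-state by a single transition. For $p'$ (symmetrically for $q'$) invariant~(1) ensures that exactly one of $\Box_{s_1}$ or $\boxtimes_{s_1}$ is present, so rule~(II.i) or rule~(II.ii) applies and produces $\mathsf{Select}_{s_1}(p')$ or $\mathsf{Reject}(p')$ respectively. After at most two such reductions, $p'$ and $q'$ are both in sub-states among $\{\mathsf{Select},\mathsf{Reject}\}$. Finally, depending on the four combinations of $\mathsf{Select}/\mathsf{Reject}$ for $p'$ and $q'$, exactly one of the rules~(III.i)--(III.iv) applies, consuming $\mathsf{Match}_c(p',q')$ together with those two sub-states and producing $\odot_c$. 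Since none of the rules used (I--III) introduces or removes restricted names, the outer binder $(\nu P)$ is preserved, as required.

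The main obstacle, though a mild one, is ensuring that invariants~(1) and~(3) are compatible with each reduction used. In particular, when $\boxtimes_{s_1}$ is already accompanied by some $\mathsf{Select}_{s_1}(p'')$ for another point $p''\in\dom(s_1)$, one must check that converting the outstanding $\mathsf{Candidate}_{p'}$ to $\mathsf{Reject}(p')$ by rule~(II.ii) is consistent with invariant~(3) — which indeed permits arbitrarily many $\mathsf{Reject}$ sub-states for distinct points of $s_1$ even though it bounds the number of $\mathsf{Select}_{s_1}/\mathsf{Done}_{s_1}/\mathsf{Retry}_{s_1}$ sub-states by one. A similar routine check is needed for rule~(II.i) against $\Box_{s_1}$. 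Beyond these bookkeeping verifications, the proof is entirely mechanical, and yields the desired sequence of at most three reductions.
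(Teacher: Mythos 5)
Your proposal is correct and follows essentially the same route as the paper's own proof: case split on invariant~(2), use invariants~(1) and~(4) to place $p',q'$ (distinct from $p,q$) in $\{\mathsf{Candidate},\mathsf{Select},\mathsf{Reject}\}$ sub-states with their synchronizer flags present, then discharge any remaining $\mathsf{Candidate}$ via (II.i--ii) and free $c$ via one of (III.i--iv). The extra bookkeeping you flag about invariant~(3) and preservation of the binder is fine but not needed beyond what the paper already argues.
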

\begin{proof}
If $\odot_c$ is in $\seq{\varsigma}$, we are done. Otherwise, by (2) it follows that $\mathsf{Match}_c(p',q')$ in in $\seq{\varsigma}$ for some $p',q' \in P$. Now, by (1) and (4), $p' \neq p$ and $q' \neq q$. Further, by (4), there are $s$ and $s'$ in the set of synchronizers in $\seq\varsigma$ such that $p' \in \dom(s)$, $q' \in \dom(s')$,
$\mathsf{Candidate}_{p'}$ or $\mathsf{Select}_s(p')$ or $\mathsf{Reject}(p')$ is in $\seq\varsigma$, and $\mathsf{Candidate}_{q'}$ or $\mathsf{Select}_{s'}(q')$ or $\mathsf{Reject}(q')$ is in $\seq\varsigma$. Further, by (1), $\Box_s$ or $\boxtimes_s$ is in $\seq\varsigma$, and $\Box_{s'}$ or $\boxtimes_{s'}$  is in $\seq\varsigma$. Now, {\rm (II.i--ii)} can be applied to move to a state $(\nu P)\Pi\seq{\varsigma''}$ such that $\mathsf{Select}_s(p')$ or $\mathsf{Reject}(p')$ is in $\seq{\varsigma''}$, and $\mathsf{Select}_{s'}(q')$ or $\mathsf{Reject}(q')$ is in $\seq{\varsigma''}$. Finally, {\rm (III.i--iv)} can be applied to move to the required state  $(\nu P)\Pi\seq{\varsigma'}$.
\end{proof}


We are now ready to prove the main theorem.

~

\noindent
{\bf Restatement of Theorem \ref{bisim}.}~Let $\mathcal C$ be the set of channels in a program $\Pi_{i \in 1..n} S_i$. Then $\Pi_{i \in 1..n} S_i~\sim~\displaystyle\Pi_{c \in \mathcal C} \odot_c~|~\Pi_{i\in 1..n} \widehat{S_i}$, where $\sim$ is the largest relation such that 
$\mathcal P \sim \sigma$ iff
\begin{description}
\item[{\em (Correspondence)}] $\sigma \rightarrow^\star \sigma'$ for some $\sigma'$ such that $\ulcorner \mathcal P \urcorner = \ulcorner \sigma' \urcorner$;
\item[{\em (Safety)}] if $\sigma \rightarrow \sigma'$ for some $\sigma'$, then $\mathcal P \rightarrow^\star \mathcal P'$ for some $\mathcal P'$ such that $\mathcal P' \sim \sigma'$;
\item[{\em (Progress)}] if $\mathcal P \rightarrow \_$, then $\sigma \rightarrow^+\! \sigma'$ and $\mathcal P \rightarrow \mathcal P'$ for some $\sigma'$ and $\mathcal P'$ such that  $\mathcal P' \sim \sigma'$.
\end{description}

\begin{proof} The proof of (Safety) is fairly easy. (Progress) follows from Lemmas \ref{inv} and \ref{priests}, as follows. By Lemma \ref{inv}, we can consider only a subset $\simeq$ of $\sim$ such that $\_ \simeq \sigma ~\Rightarrow~ \vdash \sigma$.  Now, suppose that $\mathcal P \simeq (\nu P)~\seq\varsigma$ and $\mathcal P \rightarrow \_$. 

We first assume that there are some $p\mapsto c$ and $q \mapsto \overline c$ in $\seq\varsigma$. Then, by Lemma \ref{priests} and {\rm (I)}, $(\nu P)~\seq\varsigma \rightarrow^\star (\nu P)~\seq{\varsigma'}$ such that $\mathsf{Candidate}_p$ and $\mathsf{Candidate}_q$ are in $\seq{\varsigma'}$. It can be shown, following the reasoning of Section \ref{intro}, that eventually: either $c$ and $\overline c$ are released; or, some $\alpha$ is released such that for some point $p'$ and synchronizer $s$, we have $s(p') = \alpha$ and either $p \in \dom(s)$ or $q \in \dom(s)$. In the former case, $\mathcal P$ makes the corresponding reduction, and we are done. In the latter case, the action complementary to $\alpha$ is released\linebreak[4] in the next step; then $\mathcal P$ makes the corresponding reduction, and we are done. 

On the other hand, if there are no $p\mapsto c$ and $q \mapsto \overline c$ in $\seq\varsigma$, then by (1) there must be some $p$ and $q$, and $s$ and $s'$, such that $s(p) = c$, $s'(q) = \overline c$, and one of the following sub-states is in $\seq\varsigma$: 
$$\{\mathsf{Candidate}_p,\mathsf{Select}_s(p),\mathsf{Reject}(p),\mathsf{Done}_s(p),\mathsf{Retry}_s\}$$
and one of the following sub-states is in $\seq\varsigma$: 
$$\{\mathsf{Candidate}_q,\mathsf{Select}_{s'}(q),\mathsf{Reject}(q),\mathsf{Done}_{s'}(q),\mathsf{Retry}_{s'}\}$$
\end{proof}
So either $\mathsf{Candidate}_p$ and $\mathsf{Candidate}_q$ are in $\seq{\varsigma}$, or $p$ and $q$ are in sub-states reachable from  $\mathsf{Candidate}_p$ and $\mathsf{Candidate}_q$; in either case, we proceed as before. 
\end{document}